%
%
%

\documentclass{svproc}
\makeatletter
\newcommand{\@chapapp}{\relax}%
\makeatother
\usepackage[title,toc,titletoc,header]{appendix}

\bibliographystyle{bibtex/splncs03}

%
%

\usepackage{url}

\usepackage{graphicx}%
\usepackage{multirow}%
\usepackage{amsmath,amssymb,amsfonts}%
\usepackage{mathrsfs}%
\usepackage{xcolor}%
\usepackage{textcomp}%
\usepackage{manyfoot}%
\usepackage{booktabs}%
\usepackage{algorithm}%
\usepackage{algorithmicx}%
\usepackage{algpseudocode}%
\usepackage{listings}%

\usepackage{xcolor, colortbl}
\definecolor{LightCyan}{rgb}{0.88,1,1}
\definecolor{Gray}{gray}{0.85}
\usepackage{tikz}

\newcommand{\circled}[1]{\tikz[baseline=(char.base)]{
    \node[shape=circle,draw,inner sep=1pt] (char) {#1};}}
    
\newcommand*\emptycirc[1][0.8ex]{\tikz\draw (0,0) circle (#1);} 

\newcommand*\fullcirc[1][0.8ex]{\tikz\fill (0,0) circle (#1);} 

\usepackage{tabularx}
\usepackage{makecell}
\usepackage[flushleft]{threeparttable}




\raggedbottom

\begin{document}
\mainmatter              
\title{ExPrESSO: Zero-Knowledge backed Extensive Privacy Preserving Single Sign-on}
\titlerunning{ExPrESSO: Extensive Privacy Preserving Single Sign-on}  
%
\author{Kaustabh Barman \and Fabian Piper \and
Sanjeet Raj Pandey \and Axel Küpper}
\authorrunning{Kaustabh Barman et al.} 
%
\tocauthor{First Author, Second Author, Third Author and Fourth Author}
\institute{Technische Universität Berlin, Berlin, Germany\\}


\maketitle              

\begin{abstract}
User authentication is one of the most important aspects for secure communication between services and end-users over the Internet. Service providers leverage Single-Sign On (SSO) to make it easier for their users to authenticate themselves. However, standardized systems for SSO, such as OIDC, do not guarantee user privacy as identity providers can track user activities. We propose a zero-knowledge-based mechanism that integrates with OIDC to let users authenticate through SSO without revealing information about the service provider. Our system leverages Groth's zk-SNARK to prove membership of subscribed service providers without revealing their identity. We adopt a decentralized and verifiable approach to set up the prerequisites of our construction that further secures and establishes trust in the system. We set up high security targets and achieve them with minimal storage and latency cost, proving that our research can be adopted for production.
\keywords{Zero-Knowledge Proofs, ZoKrates, zk-SNARK, OpenID Connect, Privacy, Multi-Party Computation}
\end{abstract}
\section{Introduction}\label{sec1}

In the early web, internet-based service providers (SP) maintained a siloed identity model, requiring users to manage separate accounts and passwords for every SP. As the internet grew, this led to “password fatigue” with users juggling dozens of accounts~\cite{florencio2007large}. In response, the 2000s and 2010s saw the rise of single sign-on (SSO) and federated identity solutions that allow one account to authenticate across multiple services. SSOs make it easier for the users to remember and manage single login-password credentials for multiple SPs while also reducing the security risk of weak passwords or password reusage~\cite{pfitzmann2003analysis}. With SSO systems, SPs also don't need to manage and secure huge password databases to authenticate their consumers, unlike siloed identities~\cite{yildiz2021connecting}. However, the convenience of SSO's easy login can obscure the privacy costs, as IdPs are able to easily monitor user logins and activities across various SPs~\cite{guo2022up}. 

SSO protocols such as OpenID Connect (OIDC)~\cite{sakimura2014openid} and Security Assertion Markup Language (SAML)~\cite{hughes2005security} are often deployed by Identity Providers (IdPs) for SPs to authenticate end-users. OIDC is an identity layer specification from OpenID Foundation (OIDF) designed on top of the OAuth 2.0 protocol specification~\cite{hardt2012rfc} published by IETF~\cite{10434479}. The OIDC specification was designed to verify the identity of end-users based on the identity credentials issued by the IdP. It includes obtaining profile information about the end-user in an interoperable and REST-like manner. Authentication using identity credentials has been a lively research topic, from both a security and privacy point of view. Most privacy-oriented work proposes protocols to replace OIDC, often requiring prerequisite technical knowledge or using external tools by the end-user. In our work, we discuss the privacy issues of OIDC and propose a mechanism to enhance user privacy without involving any additional user-bound action.  In OIDC, the most prominently used variants are the implicit flow and the authorization code flow. Since only the implicit flow does not require direct backend communication between the IdP and the client, only the implicit flow has any room for privacy-preserving techniques for the end-user~\cite{kroschewski2023save}.

\textbf{Contribution.} The current OIDC flows do not allow the end-user any kind of privacy from the IdP. The IdP can easily record and monitor every authentication activity of every end-user with every RP. Consequently, it becomes very easy for IdPs to maintain a behavioral profile of every end-user login activity. In our research, we propose Extensive Privacy Preserving Single Sign-On (ExPrESSO), an OIDC-based framework for anonymization of service providers from IdPs. We define certain security aspects that ExPrESSO achieves that enable user privacy both from IdPs and colluding RPs. ExPrESSO exploits properties of Zero-Knowledge Proof (ZKP) schemes to address user-privacy issues with IdPs in OIDC. We analyze current shortcomings in OIDC and build a Multi-Party Computation (MPC) based mechanism to establish trust and enable secure exchange of proofs in a trustless environment. ExPrESSO assumes that end-users do not posses any technical knowledge and does not want to divert to usage of new tools for privacy. To the best of our knowledge, no other work has applied such an approach to enhance privacy in SSOs. We further compare our results with other works that have tried to achieve similar goals as ours and provide a detailed evaluation of our comparison.



\section{Preliminaries}\label{background}
This section introduces the background necessary for understanding our work. First, the OIDC protocol, its variants, and relevant concepts are introduced. Next, we discuss ZKPs, their properties, and how they have developed over time to be more efficient and easier to use. We will also look into the various flavors of ZKPs, which one fits best for our use case, and the underlying primitives.

\subsection{OpenID Connect}
OIDC is an identity authentication protocol that is built on top of the industry-standard authorization protocol OAuth 2.0. In OIDC, the IdP deploys the authorization server that can authenticate end-users. SPs become Relying Parties (RP)/clients that send authorization requests to IdP and receive user identity attributes in response. Hence, the terms RP and clients can be used interchangeably. SPs encapsulate the identity attributes in a JSON Web Token and send it to RP only after successfully authenticating a user. Our work builds upon a specific variant of OIDC flow and also includes pseudonymous pairing for user privacy from colluding SPs. 

\subsubsection{OIDC Flows}
Since OIDC is realized on top of OAuth 2.0, it also specifies three prominent flows similar to OAuth 2.0~\cite{sakimura2014openid}: 

\begin{enumerate}
    \item \textbf{Implicit Flow: }The implicit flow is for lightweight RPs without any backend, such as Single Page Applications (SPA). Relying parties authenticate their users by simply redirecting the user to the login page of the IdP. The user logs in and obtains a \textit{id\_token} enclosing the user's identity attributes. The \textit{id\_token} is forwarded to the RP through the user agent, typically a browser. The RP can then use the \textit{id\_token} to extract the user's attributes and authenticate the user. 
    \item \textbf{Code Flow: }When the RP is capable of executing backend operations, it can follow the more secure code flow. Here, the end-user obtains an authentication \textit{code} instead of a \textit{id\_token}, which is forwarded to the RP. The \textit{code} is a short-lived one-time usage string bound to the user's consent to its identity attributes. The RP reaches out to the IdP from its backend and exchanges the \textit{code} for a \textit{id\_token} containing the user-permitted identity attributes. Since this step is necessary to obtain the \textit{id\_token}, the IdP can track the RP's IP address, which means the IdP is always able to identify the RP. Hence, the room for privacy is really less in the code flow. On obtaining the \textit{id\_token}, the RP can extract the user's attributes and authenticate the consumer, similar to the implicit flow.  
    \item \textbf{Hybrid Flow: }The hybrid flow combines both implicit and code flow. It offers immediate issuance of \textit{id\_token} to clients through implicit flow. Additionally, it also adds the exchange authentication code for further identity attributes. This flow can be useful for use cases where the client requires user authentication before running backend logic.  
\end{enumerate}
 IdPs usually maintain a list of all RPs that register with them to use their SSO service. An IdP identifies an RP during user authentication using a unique RP-identifier called \textit{client\_id}. The \textit{client\_id} is generated by the IdP and is returned to the RP right after a SP has registered as a client with the IdP's identity service, which takes place during a pre-registration phase beforehand. 
 
 In our work, we aim to prevent the IdP from finding out which RP an end-user wants to access during authentication. It means that during an SSO session, the RP must not disclose its identifier to the IdP. Furthermore, any direct HTTP communication between the IdP and the RP might also reveal the RP identity using IP address tracking. Since only the implicit flow does not require any direct interaction between the RP and the IdP, our work builds on the assumption that the actors are only using the implicit flow for SSO sessions.

\subsubsection{Pairwise Pseudonymous Identifiers (PPID)}
NIST outlined that PPIDs allow IdPs to provide multiple distinct federated identifiers to different RPs for a single end-user's account~\cite{nist-sp800-63c}. As the name suggests, PPIDs are pseudonymous identifiers linked with pairs of user-client relations. PPIDs can prevent multiple colluding RPs from linking user identity attributes together to access attributes of user identities without their consent. OIDC typically recommends IdPs to map each user identifier to the associated client identifiers that request the user identity attributes. Each combination of a user identifier and the client identifier is allocated a PPID that always remains the same for all subsequent user authentication sessions. Hence, a client always gets the same PPID for a given user, but another client will receive a different PPID for the same user. 

\subsection{Zero-Knowledge Proof}
 Zero-knowledge proofs were first introduced by Goldwasser et al. to enable a prover to convince that a statement was true without revealing any other information about the statement~\cite{10.1145/3335741.3335750}. ZKPs have three core properties:
\begin{itemize}
    \item \textbf{Completeness: }With a given statement and a witness, a prover can convince a verifier about the correctness of a statement.
    \item \textbf{Soundness: }A malicious prover is not able to convince the verifier.
    \item \textbf{Zero-knowledge: }The proof does not reveal anything but the truth of the statement, specifically not revealing the witness. 
\end{itemize}

ZKPs are either interactive (multiple message exchanges) or non-interactive (NIZK), where a single proof suffices. To convince a verifier about the correctness of a statement, the verifier must carry out a cryptographic computation to verify a proof. If the proof is computationally expensive to verify, verifiers with limited computational resources will struggle to verify it. Hence, the term Succinct Non-Interactive Argument of Knowledge (SNARK) was coined with efficient proofs for the verifier, but still computational overhead for the prover was orders of magnitude too high~\cite{bitansky2012extractable}\cite{walfish2015verifying}. Jens Groth in 2016 proposed the zk-SNARK primitive scheme, also known as Groth16, that enables small and easy-to-verify proof~\cite{groth2016size}. Since the inception of zk-SNARKs, cryptography experts and researchers have used it to enhance privacy in decentralized applications and blockchain transactions of cryptoassets such as Zerocash~\cite{6956581}.

A ZKP setup consists of a prover that proves the knowledge of a statement with a proof. A verifier verifies the proof without the statement being revealed. The context of the statement can vary depending on the use case. For a zk-SNARK, a simple function can be constructed with a conditional statement to assert the actual statement. The function may simply return true if the statement satisfies the condition, and return false otherwise. Such a function can be converted into an arithmetic circuit with Rank 1 Constraint System (R1CS). In our work, we use Groth16~\cite{groth2016size} scheme to implement such a mechanism between a prover and a verifier. Groth16 generates a Common Reference String (CRS) for a given arithmetic circuit $\mathcal{C} : \mathbb{F}_p^{n_w + n_x} \to \mathbb{F}_p^{n_o}$ where $n_w$ is the number of private witness variables, $n_x$ is the number of public inputs, $n_o$ is the number of outputs in the domain over the prime field $\mathbb{F}_p$. A valid proof shows $C(x,w)=1$ without revealing $w$. Groth16 needs a Baby Jubjub elliptic curve encoding of polynomials related to the circuit. Baby Jubjub~\cite{whitehat2020baby} belongs to the sub-class of twisted Edwards curve optimized for circuit efficiency in zk-SNARKS.
Below, we elaborate on the different stages of the mechanism. 

\subsubsection{Trusted Setup Ceremony}
The setup generates a Common Reference String (CRS) from random trapdoors $\tau, \alpha, \beta, \gamma, \delta \in F_p$. If the randomness used in this phase is known, proofs can be forged. Hence, it proceeds in two phases:
\begin{enumerate}
    \item Phase 1: \textit{Powers of Tau}  \newline
    In this phase, circuit-independent parameters $[\tau^i \cdot G]_1$ for all degrees of $i$ are generated once and reused. The Ethereum community performs a \textit{Powers of Tau} ceremony and maintains the outputs in the repository called \textit{Perpetual Powers of Tau}~\cite{powersoftau}. For the sake of simplicity, we can use the parameters from \textit{Perpetual Powers of Tau}. We select the smallest parameter with a number of points greater than the number of constraints in $\mathcal{C}$.
    \item Phase 2: \textit{Multi-Party Trapdoor Computation (MPC)} \newline
    In the second phase, trapdoor-based encoding without any single party knowing the trapdoor is computed. To keep the secret trapdoors unknown, we adopt MPC based trapdoor computation by Ben-Sasson et. al.~\cite{mpc} that includes multiple participants in the ceremony. Each participant contributes shares $\alpha_i, \beta_i, \gamma_i, \delta_i$, combined as:
    \[\alpha = \prod_i \alpha_i, \quad \beta = \prod_i \beta_i, \quad \gamma = \prod_i \gamma_i, \quad \delta = \prod_i \delta_i\]
    Participants compute and contribute group elements and apply the trapdoors to powers of $\tau$:
    \[\alpha \cdot \tau^i \cdot G, \quad \beta \cdot \tau^i \cdot G, \quad \tau^i/\delta \cdot G, \quad \text{for all required } i\]
    As long as one party deletes its share, trapdoors remain unknown. These computations are also verifiable and can be made publicly available for verification to establish trust in the setup ceremony~\cite{mpc}. The final composition of CRS takes place by evaluating a proving key $PK$ and a verification key $VK$ as:
\[PK = \{\tau \cdot G, \; \alpha \cdot G, \; \beta \cdot G, \; \gamma \cdot G, \; \delta \cdot G, \; \text{and constraint polynomials at } \tau\}\]
where the constraint polynomials at $\tau$ include $\alpha \cdot \tau^i \cdot G$, $\beta \cdot \tau^i \cdot G$, and $\tau^i/\delta \cdot G$. The $VK$ is computed as:
\[VK = \{\alpha \cdot G_1, \; \beta \cdot G_2, \; \gamma \cdot G_2, \; \delta \cdot G_2, \; \mathbf{Y}_1 \cdot G_1, \; \mathbf{Y}_2 \cdot G_1, \; \mathbf{Y}_3 \cdot G_1\}\]
where each $\mathbf{Y}_i$ is the linear combination of public inputs evaluated at $\tau$. \newline
$G_1$ and $G_2$ are generators of the respective elliptic curve groups.
\end{enumerate}

\subsubsection{Proof Generation}
Using $PK$, the prover computes the zk-SNARK proof $\pi=(A,B,C)$ for $\mathcal{C}$ by computing:
\[A = \alpha \cdot G_1 + \sum_{i} w \mathbf{A}_i(\tau) \cdot G_1\]
\[B = \beta \cdot G_2 + \sum_{i} w \mathbf{B}_i(\tau) \cdot G_2\]
\[C = \gamma \cdot G_1 + \sum_{i} w \mathbf{C}_i(\tau) \cdot G_1\] 
The proof is generated such that it can only be verified if the private inputs satisfy the circuit constraints. Such a proof is small in size, easy and fast to verify by a verifier.

\subsubsection{Verification}
The verifier checks bilinear pairings using $VK$:
\[
e(A, B) = e(C, G) \cdot e(\mathbf{x}_1 \cdot G_1, \beta \cdot G_2) \cdot e(\mathbf{x}_2 \cdot G_1, \gamma \cdot G_2) \cdot e(\mathbf{x}_3 \cdot G_1, \delta \cdot G_2)
\]
If the equation holds, the proof is valid and the witness remains hidden.

\section{Design Goals}\label{design goals}
Our work begins by realizing potential adversaries and their behavior to formulate our adversary model. Furthermore, we define the characteristics of our systems that will help mitigate adversary attacks. 

\subsection{Adversary Model}
Our model works with four actors in order to deploy a functioning privacy-preserving SSO system: \newline
\begin{enumerate}
 \item \textbf{OIDF:} 
    As OIDF is a non-profit open standards body developing identity and security specifications. The OIDF is led by several reputed legal entities (industry partners) with expert representatives from the identity industry~\cite{oidf}. OIDF-side operations and decisions are always the result of a distributed and collaborative effort. Hence, we can safely assume OIDF to be an honest and neutral actor. In our system, OIDF acts as the trust anchor that publishes standardized and compilable programs and also provides RESTful services for integrity checks for cryptographic artifacts. \newline
\item \textbf{IdP:} 
    The IdP is the central identity management entity that safeguards user's identity attributes. Based on the degree of the user's consent, it can issue identity credentials and access tokens to SPs. The IdP is assumed to be an honest but curious entity, meaning it follows the protocol honestly but might attempt to infer additional information about user authentication activities. It might actively monitor login activities to infer user behavior. It might attempt to track which clients the users are logging into and build behavioral profiles over time. \newline
\item \textbf{User:} 
    End-users are interested in authenticating to RPs using their identity managed by the IdP. Users interact with both the RP and IdP during the authentication process, typically using a user-agent such as a web browser or mobile application that facilitates interactions between the user, IdP, and RP. In our adversary assumptions, a dishonest user may attempt to authenticate to an SP that is not registered (through the registration phase) or de-registered from the IdP. \newline
\item \textbf{RP:} 
    IdPs maintain a list of clients that are SPs that are registered with the IdP's identity service. The IdP must prevent issuing authentication tokens to unregistered or deregistered RPs while maintaining privacy guarantees for legitimate logins. Dishonest colluding RPs might also try to access more identity attributes of a user than the user has originally consented to sharing. \newline
\end{enumerate}

In our work, we model the framework in such a way that any of the actors behaving dishonestly will not be able to successfully breach the privacy of the end-user. Hence, to counteract such adversarial threats, the system aims to achieve:
\begin{enumerate}
\item \textbf{RP Unobservability:} 
The system aims to achieve RP unobservability, ensuring that the IdP cannot determine which RP a user is attempting to authenticate with. 
Traditional OIDC authentication exposes the RP’s identity to the IdP in each authentication request when the RP mentions its client identifier. 
The system should prevent a curious IdP from learning the RP’s identity while still enabling secure RP validation.

\item \textbf{User Unlinkability:} 
Ensuring user unlinkability prevents multiple RPs from colluding to infer that two authentication events belong to the same user.
OIDC recommends the use of its Pairwise Pseudonymous Identifier feature \cite{sakimura2014openid} to hide the identity of a user from an RP.
This requires knowledge of the identity of RP for the IdP, which breaks with our design goal of RP unobservability.
Hence, our system needs to ensure that a user receives a unique, unlinkable pseudonym similar to the OIDC Pairwise Pseudonymous Identifiers, without exposing the identity of RP to IdP.

\item \textbf{Authentication Integrity:} 
Authentication integrity guarantees that only registered RPs can obtain valid authentication tokens from the IdP. Without this restriction, a malicious entity could attempt to impersonate an RP and receive a valid token, leading to unauthorized access or token misuse. 
The system must ensure that authentication requests originate only from pre-registered and authenticated RPs.  

\item \textbf{Client Revocation:} 
The system should also let IdPs revoke RP's membership when an RP deregisters from the IdP's identity service. The revoked RP should not be able to obtain a user identity token, even if it possesses old data/credentials from the IdP. 

\item \textbf{Trust Distribution:}
Trust distribution aims to reduce the need for implicit trust in a single party.
Verifiable mechanisms and trust minimization should be employed to mitigate single responsibilities.
The approach should take a holistic perspective, focusing not only on the login process but also on registering and deregistering an RP.

\end{enumerate}
Such security aspects will mitigate adversaries including internal dishonest actors of the system and external attackers.  


\subsection{Expected Characteristics}
We want to design ExPrESSO to integrate seamlessly with existing protocols such as OIDC without requiring any additional effort from end-users. Users should not need to install applications, plugins, or tools; once IdP and RP agree to use ExPrESSO, privacy-preserving authentication is achieved transparently. The framework assumes a trustless environment, where no single entity is fully relied upon, and any necessary trust anchors (e.g., standardization or setup) are distributed among multiple parties. Finally, the system is intended to be practical, adding minimal computational and communication overhead so that it can be deployed in real-world scenarios.


\section{ExPrESSO Setup and Architecture}\label{sec8}
Our system comprises a sequence of phases involving the actors in OIDC-based authentication. Following IETF’s OAuth 2.0 Dynamic Client Registration Protocol~\cite{richer2015oauth} and OpenID’s Dynamic Client Registration~\cite{sakimura2014clientreg}, we include an RP registration phase that occurs before user authentication. Only registered RPs can later authenticate users through the IdP’s service. We describe how ExPrESSO conceals RP identity over the issued \textit{client\_id}, and how this credential is used during authentication. We also cover membership revocation when RPs deregister from the IdP. Each phase is independent but must follow a sequential order, i.e., a later phase can only occur after its prerequisites. Figure~\ref{fig:expresso architecture} illustrates this architecture, showing actor interactions across registration and authentication with an MPC-based trusted setup.


\begin{figure*}[h]
  \centering
  \includegraphics[width=1.0\textwidth]{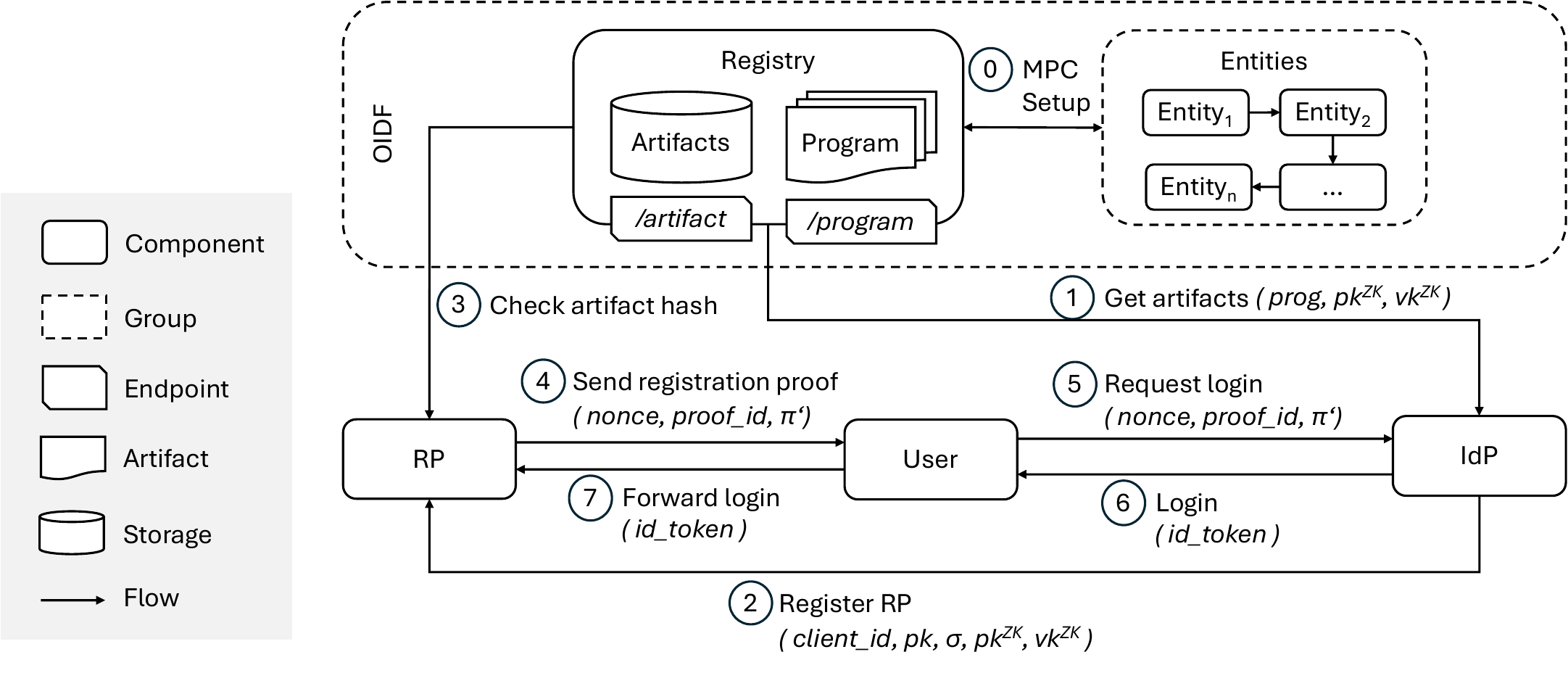}
  \caption{Overview of ExPreSSO Architecture with MPC-based trusted setup}
  \label{fig:expresso architecture}
\end{figure*}

\vspace{-10px}
\subsection{Client Registration}
The RP (client) registers with the IdP to allow its users to authenticate at a later point in time using the identity service of the IdP. During user authentication, the IdP checks if the user is authenticating to an RP that has registered and subscribed to the IdP's identity service. In OIDC, the IdP issues a \textit{client\_id} to the RP along with an optional \textit{client\_secret} and all other registered metadata about the RP as a response~\cite{sakimura2014clientreg}. In our system, the IdP additionally generates a public key and signs the \textit{client\_id} before returning them as a set of credentials (public key, \textit{client\_id}, and signature). The IdP later expects the RP to provide a zk-SNARK-based proof of membership for such \textit{client\_id} during the user authentication session without revealing the \textit{client\_id}. As explained in section~\ref{background}, this requires a trusted ceremony to be performed. Hence, we assume OIDF to set up a secure environment such that the RP can create and exchange a proof in a trustworthy manner. In Figure~\ref{fig:expresso architecture}, we describe the operations to be performed by OIDF, IdP, and RP, which are represented by step \circled{0} to \circled{2}.

\subsubsection{Circuit Design by OIDF.}
 OIDF is managed by a group of entities to anchor trust in a decentralized manner. The entities collaborate to create and agree to a boilerplate program, a high-level, human-readable, and compilable set of instructions. The boilerplate should clearly mention public and private inputs to a function that verifies the authenticity of a given signature for a message against a public key. An example of such boilerplate code is shown in algorithm~\ref{alg:verifySign}. In our model, we assume that the code is standardized, open-sourced, and published by the OIDF. This eliminates any possibility for curious IdPs to hide the scope of the parameters from the RP and establish trust in the circuit. Hence, both IdP and RP can trust it as the scopes of the inputs are transparent. Furthermore, the code is used as the basis for the required arithmetic circuit with R1CS in all of the stages of Groth16's primitive.

\subsubsection{OIDF's MPC-Based Trusted Ceremony.}\label{oidf-mpc}
Other than standardizing the boilerplate program, the OIDF plays another important role in our system by performing the trusted ceremony. As described in section~\ref{background}, a trusted ceremony is necessary to generate a set of proving and verification keys. In our work, OIDF uses \textit{Pertual Powers of Tau} for the first phase of the ceremony. For the second phase, OIDF adopts Ben-Sasson et. al's MPC-based trapdoor computation~\cite{mpc}. Each entity (partners) in OIDF must contribute their version of randomized secret trapdoors and public parameters. 
OIDF can run multiple cycles of the 2-phase trusted-setup ceremony for the same R1CS circuit (deduced from the boilerplate program) to generate multiple sets of artifacts containing the proving and verification keys. We call the set of artifacts containing the boilerplate program, a proving key $PK$, and a verification key $VK$ as \textit{zk-Artifacts}. Each cycle generates a set of zk-Artifacts with different proving and verification keys for the same R1CS circuit. OIDF should maintain a pool of zk-Artifacts by running multiple cycles of the 2-phase ceremony. IdPs can request the zk-Artifacts from OIDF whenever necessary. Additionally, OIDF  publishes a hash of the issued zk-Artifacts \textit{H(zk-Artifacts)} publicly to prevent IdPs from acquiring multiple zk-Artifacts for tracking RPs. 

\begin{algorithm} 
\caption{High-level level program to verify signature $R,S$ with associated \textit{client\_id} and public key $pk$ of IdP}\label{alg:cap}
\begin{algorithmic}
\Function{verifySignature}{public $pk$, private $R$, private $S$, private $M$} $\rightarrow(bool)$
\State $h \gets hash(pk, M) $
\State $ lhs \gets S * G $
\State $ rhs \gets R+h*pk $ \newline
\Return $ lhs == rhs n$
\EndFunction
\end{algorithmic}
\label{alg:verifySign}
\end{algorithm}
\vspace{-15px}

\subsubsection{Exchange of RP Credentials and zk-Artifacts.}
In ExPrESSO, RP registration with the IdP's identity service is crucial. The IdP requests zk-Artifacts from OIDF, which are then prepared to be shared with the RP. During registration, the RP sends some typical RP information, such as client name and redirect URI. When the RP has registered with the IdP's identity service successfully, the IdP generates a corresponding unique \textit{client\_id} only valid for the respective RP. The IdP then signs the \textit{client\_id} and binds the \textit{client\_id}, its public key $pk$, and signature $R,S$ into an RP-credential. The RP-credentials and zk-Artifacts are sent to the RP as a response to successful registration. 

\subsection{User Authentication}
The user authentication takes place post RP-registration. In the typical OIDC flow, the user initiates the authentication process by selecting the IdP to use. The RP then redirects the user-agent to the IdP's login page. We assume that the RP uses a proxy redirection link to hide the real redirection URI from the IdP. The RP includes its \textit{client\_id}, requested scope, local state, and a redirection URI to which the IdP sends the user-agent back when access is granted~\cite{hardt2012rfc}. In ExPrESSO, similar to typical OIDC flow, the user also initiates the authentication process and gets redirected to the IdP's user authentication page. However, the RP does not send its \textit{client\_id} during the user authentication process in order to hide its identity for the user's privacy. After the user performs a conventional username-password-based authentication, the RP proves its membership with the IdP by sending a proof of membership using the deployed zk-SNARK scheme in ExPrESSO. Hence, the RP must check for artifact integrity by interacting with OIDF, then create a zk-SNARK-based proof of membership and send it to the IdP. The IdP authenticates the user and verifies the membership proof before returning an identity token. Each of these steps is explained in detail below and is represented in Figure~\ref{fig:expresso architecture} from step \circled{3} to \circled{7}.

\subsubsection{Interaction with OIDF.}
During user authentication, to prove RP's membership status with the IdP without revealing its real identifier, the RP must generate a zk-SNARK-based proof. However, before generating the proof, the RP should also verify the integrity of zk-Artifacts that it received from the IdP. This is to ensure that the zk-Artifacts obtained from the IdP are uniform with the most recent version of artifacts in use by all other clients. The RP verifies the integrity of zk-Artifacts by performing the following three steps:
\begin{enumerate}
    \item The RP obtains from OIDF the hash of the most recent version of zk-Artifact associated with the respective IdP.
    \item The RP hashes the zk-Artifact that it received from the IdP.
    \item The RP matches the two hashes to verify that the zk-Artifact from IdP is the most recent version. 
\end{enumerate}

\subsubsection{Membership Proof by RP.}
After making sure that the zk-Artifacts are uniform with the most recent version, the RP can proceed to create a zk-SNARK-based proof of membership. As explained in section~\ref{background}, to create such a proof, the RP (prover) executes the following steps:
\begin{enumerate}
    \item The RP compiles the published boilerplate program with pre-defined input parameters obtained from the OIDF's published endpoint. Compiling the boilerplate program results in the required R1CS circuit for proof generation.
    \item The RP then extracts the proving key from the zk-Artifact. 
    \item It computes a witness vector by providing all the public and private inputs to the circuit in the right order.
    \item The RP generates the proof $\pi$ with the three elements $\pi = \{A,B,C\}$.
\end{enumerate} 
After generating the proof, the RP redirects the user to the IdP along with the proof $\pi$.

\subsubsection{Proof Verification by IdP.}
When an IdP receives a user authentication request, it now contains the proof and the scope of the requested user's identity attributes in addition to other metadata. After the user authenticates itself and provides consent for the requested identity attributes, the IdP proceeds to verify the RP's membership proof. The membership proof can be verified to prove that the RP holds a signed \textit{client\_id} and an up-to-date proving key without revealing the \textit{client\_id}. The IdP performs the verification step as mentioned in section~\ref{background}. If the proof is verified positively, the IdP creates an identity token and signs it. 

\subsection{RP De-Registration}\label{revocation}
In ExPrESSO, it is also possible for RPs to deregister themselves from a given IdP's identity service. The IdP can revoke membership of RPs by an easy mechanism. When an RP deregisters itself, IdP can simply request a new set of zk-Artifacts from OIDF. IdP should then send (broadcast) the new set of artifacts to all remaining valid RP members and use only the new verification key from the new artifacts set to verify further membership proofs. De-registered RPs will not receive the corresponding new proving key, making them unable to generate membership proofs using the latest version of zk-Artifacts. 

\section{Implementation and Security Analysis}
In this section, we will discuss the systematic integration of zk-SNARK and PPIDs. We will further analyze how certain implementations of mechanisms achieve our security goals. 

\subsection{zk-SNARK Construction}
As explained in the previous section, all actors except the end-user must carry out cryptographic computations regarding the trusted setup ceremony, proving and verification. The verifiable computation schemes are implemented using the ZoKrates domain-specific language.
ZoKrates is a toolbox for zk-SNARK that can make verifiable computations from the specification of a high-level and compilable program to generate proofs of computation and verifying those proofs~\cite{eberhardt2018zokrates}.
It provides support for Groth16 and other schemes such as Marlin and GM17. 
We used ZoKrates to implement proof of membership in zero-knowledge. 

\textbf{IdP Bootstrapping.} The IdP must generate a public-private key pair to sign credentials during the client registration phase. We bootstrap the IdP by generating private key $sk \in \mathbb{Z}_q$ and Public Key $pk = sk.G$ where $G$ is the generator of the Baby Jubjub elliptic curve. 
The IdP is later able to use $sk$ to generate signatures for multiple \textit{client\_id} of clients. 
The IdP also compiles the boilerplate program to generate a R1CS-based arithmetic circuit. The R1CS output requires 400MB of disk space, including the circuit artifact and a corresponding circuit-bound Zokrates artifact.

\textbf{OIDF Functionalities.} The OIDF plays the role of providing the trust anchor by standardizing and open-sourcing the boilerplate program and publishing the hash value \textit{H(zk-Artifacts)} of IdPs. It also allows OIDF to maintain a pool of zk-Artifacts for multiple IdPs through a single boilerplate program instead of catering to different circuits from different IdPs, which will be slower in orders of magnitude for MPC-based artifact generation. Each zk-Artifact include a proving key of size 38.4MB and the verification key of size 4KB, making each set approximately 39MB in size. Depending on the number of artifacts OIDF decides to store in the pool at any time, the disk space requirement for OIDF will be a multiple of 39 MB and an additional 400MB for the R1CS output. This space requirement will always be constant since, after the allocation of an artifact to an IdP, it can be permanently deleted and replaced with a new artifact in the pool.

Next, as explained in section~\ref{oidf-mpc} the OIDF generates zk-Artifacts for all IdPs and maintains multiple versions of it. OIDF further makes the zk-Artifacts available to IdPs by exposing an endpoint. Additionally, whenever an IdP requests a zk-Artifact, the latest hash value of that artifact is published.

\begin{figure}[htbp]
  \centering
  \includegraphics[width=1\textwidth]{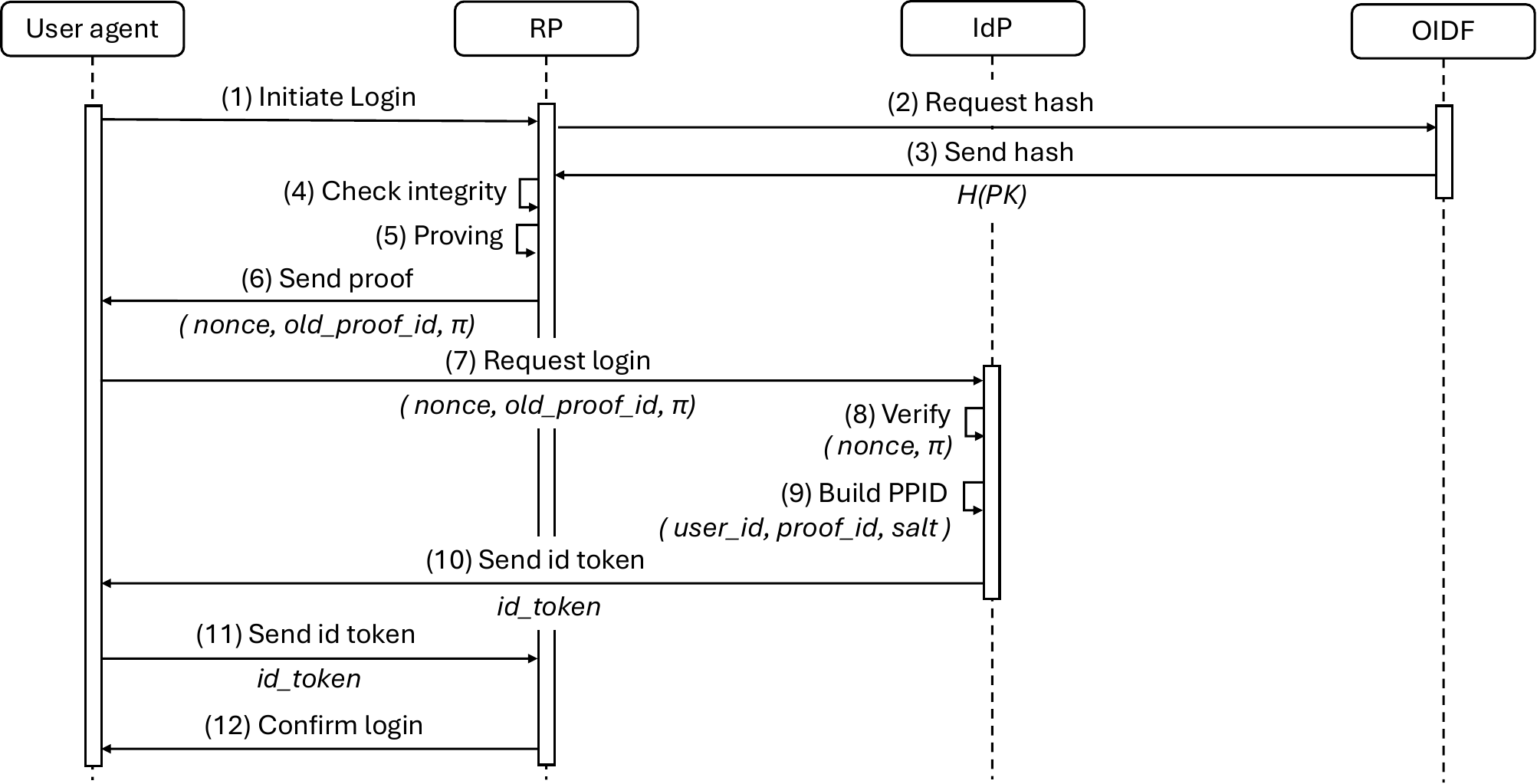}
  \caption{Sequential flow of information during user authentication phase}
  \label{fig:login_flow}
\end{figure}

\begin{theorem}\label{thm1}
Standardized circuits and published artifact hash values prevent a curious verifier from discovering the prover's identity.
\end{theorem}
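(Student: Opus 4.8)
\emph{Proof proposal.} The plan is to show that the entire transcript a curious IdP observes during an authentication session can be reproduced (simulated) from public information together with the single bit ``the membership proof verifies,'' so that it is in particular independent of the RP's \textit{client\_id}. I would fix the adversary as an honest-but-curious verifier $\mathcal{A}$ (the IdP) that runs the protocol faithfully but inspects every message; its view in one session is $(\pi,\, pk,\, \text{scopes},\, \text{metadata})$, where $\pi=(A,B,C)$ is the Groth16 proof and the only circuit value tied to the RP, namely \textit{client\_id} together with the signature $(R,S)$, is a \emph{private} witness rather than a public input to $\mathcal{C}$.

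First I would establish that the circuit $\mathcal{C}$ and the CRS are common to every RP registered with a given IdP and carry no RP-specific information. The circuit is the R1CS derived deterministically from the boilerplate program, which by assumption OIDF standardizes, open-sources and publishes; hence $\mathcal{A}$ cannot substitute a circuit $\mathcal{C}'$ with extra outputs or constraints that would leak the witness, since the RP (and any third party) can recompile and check $\mathcal{C}$ against the published standard. For the keys, the RP performs the integrity check of Section~\ref{sec8}: it recomputes $H(\text{zk-Artifacts})$ from what the IdP sent and compares it with the value published by OIDF. By collision resistance of $H$, $\mathcal{A}$ cannot hand a watermarked or otherwise RP-distinguishing proving key $PK$ to an individual RP without the mismatch being detected, so $PK$ and the associated $VK$ form a fixed public string. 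I would also note that the trapdoors $\tau,\alpha,\beta,\gamma,\delta$ are produced by OIDF's multi-party ceremony, so $\mathcal{A}$ never learns them and has no private setup state it could correlate with a proof.

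Next I would invoke the zero-knowledge property of Groth16 from Section~\ref{background}: relative to the fixed, public CRS for the fixed circuit $\mathcal{C}$, there is a simulator that, on input the public value $pk$ alone, outputs proofs indistinguishable from honestly generated ones for any valid witness. Therefore $\pi$ is computationally independent of the witness, and in particular of \textit{client\_id}. Combining the two parts, $\mathcal{A}$'s per-session view is simulatable from $pk$, the scopes, the metadata and the acceptance bit; applying the simulator across arbitrarily many sessions shows that no collection of transcripts lets $\mathcal{A}$ distinguish which registered RP it is interacting with beyond what it already knows. Hence a curious verifier learns at most that the prover holds \emph{some} validly signed \textit{client\_id}, which is the desired RP unobservability.

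The step I expect to be the main obstacle is the second one: zero-knowledge holds only with respect to a single fixed CRS and circuit, so the real content of the theorem is ruling out that $\mathcal{A}$ abuses the artifact-distribution and setup channels to fingerprint RPs. Making this rigorous requires (i) pinning down exactly what the RP's hash check guarantees -- essentially that $\mathcal{A}$ is bound to the OIDF-published $PK$ up to a negligible collision probability -- and (ii) arguing that the standardized, publicly recompilable circuit removes the remaining degree of freedom. I would also state the out-of-scope caveat that network-level identifiers such as IP addresses are handled separately by the implicit-flow and proxy-redirection assumptions of Section~\ref{sec8}, so the theorem concerns the cryptographic transcript only.
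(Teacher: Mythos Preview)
Your proposal is correct and covers the same two attack channels as the paper, but the route is genuinely different. The paper argues by \emph{necessity}: it supposes the circuit is not standardized (so the IdP can alter input scopes) and supposes the artifact hash is not published (so the IdP hands $PK_1$ to $\text{Client}_1$, $PK_2$ to $\text{Client}_2$, and later identifies $\text{Client}_1$ by observing that only $VK_1$ verifies its proof); it then concludes that standardization plus hash publication blocks exactly these attacks. You instead argue \emph{sufficiency} via simulation: once the hash check (under collision resistance of $H$) and public recompilation pin the CRS and circuit to a single shared string, Groth16's zero-knowledge simulator renders $\pi$ independent of the witness, so the IdP's per-session view is simulatable from public data and the acceptance bit. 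Your approach actually proves the positive statement in the theorem and handles multiple sessions explicitly; the paper's attack-illustration is shorter and more accessible to a systems audience but, strictly read, only shows that the two measures are needed rather than that they suffice.
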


\begin{proof}
    We assume that the circuit is not standardized. The verifier (IdP) will be controlling the circuit. This gives the verifier the power to manipulate the circuit and deceive the prover about the scope of the input. Also, if the hash value of zk-Artifacts is not published by OIDF, IdPs will be able to associate RPs with their proofs by assigning unique zk-Artifacts to each client. Let's say IdP requests $zk-Artifact_1$ from OIDF, extracts proving key $PK_1$, and provides it to $Client_1$ during the RP registration phase. It can further request $zk-Artifact_2$ and provide $PK_2$ to $Client_2$. When $Client_1$ sends $Proof_1$ during user authentication phase, the IdP can find out that only verification key $VK_1$ (also extracted from $zk-Artifacts_1$) can be used to verify $Proof_1$ successfully, hence revealing the mapped $Client_1$'s identity. By publishing the hash value of the artifacts, all clients will be able to check if they are using only the latest and a common proving key for proof generation. 
\end{proof}

\textbf{Client Membership Proof.}
Figure~\ref{fig:login_flow} shows the user authentication phase.
During the user authentication phase, the RP sends its membership proof to the IdP through the user agent.
The proof convinces the IdP that the RP possesses a valid \textit{client\_id} that has been signed by the IdP's private key. This achieves the completeness of ZKP. The proof does not reveal the \textit{client\_id} to the IdP, marking the zero-knowledge property as well. Lastly, it is infeasible for an external adversary without a valid \textit{client\_id} to generate a proof that can convince the IdP about membership, achieving the soundness property of ZKP.

\subsection{Proof-Based PPIDs}
We implement PPIDs for the privacy of end-users from colluding RPs. In ExPrESSO, IdPs map the user identifiers with \textit{proof\_id}, which is the hash value of membership proofs \textit{H(proof)}. The IdP computes a fixed PPID as \textit{H(user\_id} \texttt{||} \textit{proof\_id} \texttt{||} \textit{salt)} for a given user whenever a certain proof is provided. The PPID for a user remains the same through multiple login sessions for a specific RP. However, A different yet deterministic PPID is returned for the same user when a different proof (denoting a different RP) is received.

\begin{theorem}\label{thm2}
Proof-based PPIDs are unlinkable to unique users across multiple colluding-RPs in the random oracle model. 
\end{theorem}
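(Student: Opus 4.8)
The plan is to recast the informal claim as an indistinguishability experiment in the random oracle model and to reduce any linking success of the colluding RPs to the event of guessing the IdP's secret salt (equivalently, its internal \textit{user\_id}). Model $H$ as a random oracle. A coalition $\mathcal{A}$ corrupts a set of RPs; each corrupted RP $j$ knows its own membership proof $\pi_j$, hence $\mathrm{pid}_j = H(\pi_j)$, and for every user $u$ authenticating to it $\mathcal{A}$ learns $\mathrm{PPID}_{u,j} = H(\mathrm{uid}_u \,\|\, \mathrm{pid}_j \,\|\, \mathrm{salt})$, where $\mathrm{uid}_u$ abbreviates \textit{user\_id} and $\mathrm{salt}$ is chosen by the IdP uniformly from a superpolynomial-size space. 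In the linking game $\mathcal{A}$ is handed two PPIDs that were issued to two distinct corrupted RPs and must decide whether they stem from the same user; note $\mathcal{A}$ never observes $\mathrm{uid}_u$ (an IdP-internal value) nor $\mathrm{salt}$.

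First I would replace the IdP by a simulator that answers PPID requests by lazy sampling: it keeps a table keyed by the pair $(\mathrm{uid}_u, j)$, returning a fresh uniform string the first time a pair occurs and the stored string thereafter. This reproduces exactly the behaviour asserted in the excerpt — the same PPID across all sessions with a fixed RP, a different yet deterministic PPID for a different RP — without the simulator ever querying $H$ at $\mathrm{uid}_u \,\|\, \mathrm{pid}_j \,\|\, \mathrm{salt}$.

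Second, a hybrid argument shows that $\mathcal{A}$ cannot tell the real IdP from this simulator: the two behave identically unless $\mathcal{A}$ queries $H$ at an actual preimage $\mathrm{uid}_u \,\|\, \mathrm{pid}_j \,\|\, \mathrm{salt}$, which would force the simulator to program the oracle. Since $\mathcal{A}$ knows $\mathrm{pid}_j$ but must additionally hit the unknown $\mathrm{salt}$ (and the hidden $\mathrm{uid}_u$), a union bound over $q_H$ oracle queries and the number of users bounds this event by a negligible quantity. In the simulated world every value $\mathcal{A}$ sees is an independent uniform string, and the only correlation is the repetition of a single PPID within one RP; consequently, for two distinct corrupted RPs $j \neq j'$, the pair $(\mathrm{PPID}_{u,j}, \mathrm{PPID}_{u,j'})$ for a common user $u$ is distributed identically to $(\mathrm{PPID}_{u,j}, \mathrm{PPID}_{u',j'})$ for distinct users $u \neq u'$. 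Hence $\mathcal{A}$'s view is statistically independent of the cross-RP linkage relation and its advantage in the simulated game is $0$; adding back the negligible simulation gap yields the theorem.

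I expect the main obstacle to be the consistency book-keeping: the simulator must serve a deterministic PPID per $(\mathrm{uid}, \mathrm{RP})$ over arbitrarily many sessions while keeping $H$ programmable, and one must check that nothing else in the transcript — the proof $\pi_j$, the published zk-Artifacts and their hash, the returned identity token — carries information about $\mathrm{uid}_u$ that survives the unknown $\mathrm{salt}$, so that intra-RP repetition is indeed the only correlation present. A secondary point worth flagging is that Groth16 proofs are re-randomizable, so the clean argument implicitly relies on each RP pinning a single proof $\pi_j$ (making $\mathrm{pid}_j$ well defined); if proofs varied per session, the ``stable PPID'' property and the reduction would have to be adjusted accordingly.
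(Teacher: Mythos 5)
Your proposal is correct and rests on the same underlying idea as the paper's proof, but the paper's version is only a two-to-three-sentence sketch: it asserts that since the two \textit{proof\_id}s differ, the two PPIDs are ``different random values from the hash function's domain,'' and that no RP can compute a pre-image because \textit{user\_id} is unpredictable. Your simulation-based formalization --- lazy-sampling simulator, hybrid step, union bound over the $q_H$ oracle queries hitting the unknown $\mathrm{salt}\,\|\,\mathrm{uid}$ preimage --- is exactly the rigorous version of that intuition, so the route is the same even though the level of detail is very different; what your write-up buys is an explicit adversary model (which PPIDs the coalition sees, which inputs it can and cannot query) that the paper leaves implicit. Your closing caveat about Groth16 proof re-randomizability is a genuine observation the paper does not address in the proof: the scheme's ``stable PPID per RP'' property only holds because the paper elsewhere assumes each RP caches and reuses a single proof $\pi_j$ across sessions (``the RP does not need to re-generate it every time''), so $\mathrm{pid}_j = H(\pi_j)$ is well defined; if proofs were refreshed per login, both the determinism claim and the unlinkability statement would need restating. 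Flagging that dependency is a useful addition rather than a gap in your argument.
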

\begin{proof}
    Given a specific user, $Client_1$ received $Id\_token_1$ from the IdP with $PPID_1$ as the \textit{user\_id}. $Client_2$ received $Id\_token_2$ from the same IdP and for the same user, but with $PPID_2$ as the \textit{user\_id}. When $Client_1$ and $Client_2$ collude and compare the PPIDs, they will be different random values from the hash function's domain, as the \textit{Proof\_id} for both RPs are different. Therefore, the PPIDs will not provide any information about the \textit{user\_id} of the user in the random oracle model. Furthermore, the \textit{user\_id} is an unpredictable random value, and thus no RP can efficiently compute a pre-image of the hash function. 
\end{proof}

All phases of ExPrESSO~\cite{ExPrESSO2025} including registration, and deregistration are shown in the appendix in figure~\ref{fig:all_flow} of section~\ref{sec:app_full_flow}.

\subsection{Security Analysis}
Now we formally analyze each of our five security and privacy aspects and discuss how ExPrESSO fulfils them.
We also describe how our model offers the expected characteristics.
\begin{enumerate}
\item \textbf{RP Unobservability: }
The RP unobservability property requires that, despite the RP being properly authenticated with the IdP, its identity remains hidden from the IdP.
ExPreSSO guarantees RP unobservability by relying on the zero-knowledge property of ZKPs. This ensures that no information about RPs is revealed to IdPs other than the validity of the RP's authentication.
Based on that, ExPrESSO considers two additional attack vectors:
First, a curious IdP might attempt to assign unique ZKP artifacts to each RP during registration or updates. 
The IdP could then resolve the 1-to-1 mapping between a presented ZKP and the RP's identity. In ExPrESSO, this is mitigated by Theorem~\ref{thm1}.
When artifacts need to be updated, the IdP requests new ZK artifacts from OIDF. 
RPs can further ensure they are using the legitimate artifacts by checking the integrity of ZK artifacts provided by the IdP against the public records maintained by OIDF. 
This prevents the IdP from distinguishing RPs based on the ZKP artifacts used for proof generation.
Second, direct communication from the RP to the IdP to submit the ZKP could expose the RP's IP address, allowing the IdP to identify the RP. 
In ExPrESSO, this is mitigated by relying on the OIDC implicit flow with the user agent forwarding the ZKP from an RP to the IdP as part of the authentication.

\item \textbf{User Unlinkability: }
In our work, the client identifies users through proof-based PPIDs instead of \textit{user\_id}. From Theorem~\ref{thm2}, we can assert that colluding clients cannot link the identity attributes of a user even if both clients hold an \textit{Id\_Token} of the same user. Note that this property is achieved without breaking unobservability by replacing \textit{Client\_Id} with \textit{Proof\_Id}. 

\item \textbf{Authentication Integrity. }
ExPreSSO preserves authentication integrity against malicious RPs by relying on the unforgeability of the EdDSA-based signature derived from \textit{client\_id} and the private key of IdP.
It is infeasible for malicious RPs to execute the proving without holding a valid authentication token issued by an IdP.
Based on the soundness property of the ZKP, no adversary can construct a proof to pass the RP authentication, since \textit{client\_id} is only known to the RP. 
Therefore, no adversary can successfully pass the authentication.

\item \textbf{RP Revocation: }
In our system, a RP can also deregister from an IdP's identity service. The de-registered RP stops receiving periodic updates for proving key $PK$ from the IdP. Simultaneously, as explained in Section~\ref{revocation}, when an RP deregisters, the IdP obtains a new set of zk-Artifacts. The IdP further extracts the new proving key $PK'$ and broadcasts it to all currently valid RPs. Without receiving $PK'$, the de-registered RP is unable to create a new valid proof and is unable to convince the IdP anymore about its membership, marking the achievement of RP revocation. 

\item \textbf{Trust Distribution: }
ExPrESSO distributes trust by reducing reliance on a single trusted entity.
As proven in Theorem~\ref{thm1}, a standardized circuit (derived from the standardized boilerplate program) can be the basis for preventing curious IdPs from discovering RP's identity. The standardized boilerplate program is formulated and maintained by the consensus of multiple entities in OIDF. As a result, the user and the RP realize a transparent and secure anchor for a high degree of trust. 
Furthermore, for RP registration, ExPrESSO employs the MPC setup ceremony to generate the cryptographic parameters required by the ZKP scheme.
This ensures that the integrity and privacy of ExPrESSO do not depend on a single party involved in the setup, as long as at least one honest participant contributes correctly.
The latter relieves the IdP from the responsibility of secret trapdoor parameter deletion in a trustworthy manner, further increasing the security of our zk-SNARK scheme adopted by our ExPrESSO.
\end{enumerate}

\section{Evaluation and Result}\label{evaluate}
We evaluated the prototype of ExPrESSO for its performance, including proof generation and proof verification. We record the average latency and the communication overhead for user authentication by running them 50 times. We tested the prototype on an Apple MacBook Pro machine equipped with an Apple Silicon M1 Pro processor, 16GB of Memory, and a 500GB SSD.  

\begin{table}[htbp]
\begin{threeparttable}
    \centering
    \caption{Comparison of execution time (in ms)}
    \begin{tabularx}{\textwidth}{X|XXXX}
    \toprule
    \makecell[l]{\textbf{Solution}} &
    \makecell[l]{\textbf{Proving}} & 
    \makecell[l]{\textbf{Verification}} &
    \makecell[l]{\textbf{OIDC Ops.}} &
    \makecell[l]{\textbf{User Auth.}}
    \\
    \midrule
    \makecell[l]{POIDC \cite{10.1145/3320269.3384724}} & -- & -- & 14000 & 13014 \\
    \makecell[l]{EL PASSO \cite{zhang2021passo}} & 11.56 & 13.41 & 800 & 824.97 \\
    \makecell[l]{AIF-ZKP \cite{kroschewski2023save}} & 25.63 & 18.98 & 9.86 & 54.47 \\
    \makecell[l]{ARPSSO \cite{10.1007/978-3-031-70890-9_14}} & 193.2 & 24.3 & -- & 217.5 \\
    \makecell[l]{OPPID \cite{kroschewski2025oppid}} & 15.26 & 17.33 & 1.47 & 34.06 \\
    \midrule
    ExPrESSO & 4338\tnote{*\textsuperscript{1}} & 237.30 & 1.8 & 239.2\tnote{*\textsuperscript{2}} \\
    \bottomrule
    \end{tabularx}
    \begin{tablenotes}
        \item[*\textsuperscript{1}] only executed once
        \item[*\textsuperscript{2}] includes \textit{Verification} and \textit{OIDC Ops.}
    \end{tablenotes}
    \label{tab:scheme_comparison}
\end{threeparttable}
\end{table}

ExPrESSO requires OIDF and the IdP to first initialize the prototype, which includes running at least one cycle of MPC-based trusted setup for a given IdP. The trusted setup phase is, however, an asynchronous process on which the user authentication does not depend entirely. Hence, we can conclude that the time duration and the communication overhead of the process do not affect or add to any of the other phase's runtime. The client registration phase is also independent and takes place separately prior to the user authentication phase. Hence, it does not affect the user authentication phase performance as well.


\textbf{Scaling Cost: }
Our R1CS arithmetic circuit has 94180 constraints. If OIDF decides to update the boilerplate program with additional operations (for example, by adding security features), it will change the number of constraints in the arithmetic circuit. The Groth16 prover’s runtime grows roughly linear with the number of constraints in the circuit (denoted by n)~\cite{salleras2020sans}. Proving is $O(n)$ – dominated by large multi-scalar multiplications and a few Fast Fourier Transforms (FFT) on polynomials. In practice, multi-scalar multiplication and FFT steps account for approximately 90\% of proving time~\cite{291319}. Groth16’s setup phase produces a proving key of size $O(n)$ and the prover must process all $n$ constraints, so both time and memory scale linearly in circuit size~\cite{salleras2020sans}. Updates to the high-level program do not affect the IdP's verification time, as Groth16 proofs are verified in constant time with respect to the circuit size.

\textbf{Operational Cost. }
We measured the time required to generate a proof by the RP and the proof verification time by the IdP. We further measured the total mean user authentication time required over 50 repetitions. Since the proof is generated for the same arithmetic circuit for every user authentication session, the RP does not need to re-generate it every time. Hence, the time required for proof generation has not been factored into the user authentication time in our case. Table~\ref{tab:scheme_comparison} presents a comparison of the performance of our system with the 5 most relevant works as discussed in section~\ref{related}.
Furthermore, the membership proof that the RP sends to IdP with the authentication request has a size of approximately 4KB. The size of the proof also remains the same even when scaling as discussed before. 

\section{Related Work}\label{related}
Since 2020, the research focus has shifted greatly toward OIDC-based privacy-preserving authentication systems.
We reviewed five OIDC-based relevant and comparable systems and marked the difference in their security properties in table~\ref{tab:comparision}.

POIDC~\cite{10.1145/3320269.3384724} hides the RP identity from IdP and prevents colluding RPs from linking users by generating one-time pseudonyms. However, it lacks a clear RP-registration phase, leaving uncertainty on how IdPs authenticate clients before authenticating end-users. This makes it vulnerable to malicious or unregistered clients obtaining tokens. EL PASSO~\cite{zhang2021passo} combines PS Signatures and NIZKs to provide efficient anonymous credentials with features like asynchronous login and selective disclosure. Its drawback is incompatibility with the OIDC flow since it omits RP registration, making it impossible for IdPs to restrict services to registered clients. AIF-ZKP~\cite{kroschewski2023save} enhances the OIDC implicit flow by adding ZKP-based RP authentication, protecting against phishing while hiding RP identity. Its limitation is reliance on short-lived, epoch-based credentials, which increase overhead, and it does not fully address colluding RP unlinkability. OPPID~\cite{kroschewski2025oppid} extends AIF-ZKP with pairwise pseudonymous identifiers to achieve unlinkability against colluding RPs. However, it requires end-users to perform cryptographic operations like opening commitments and signing requests, which demands technical expertise or extra tools — a significant usability limitation. ARPSSO~\cite{10.1007/978-3-031-70890-9_14} adapts privacy-preserving techniques to the OIDC code flow using anonymous credentials and in-browser forwarding modules to hide RP identity. Yet, its privacy guarantee depends on external tools like Tor for IP protection, which is impractical in real-world deployments.

\begin{table}[htbp]
    \centering
    \caption{Comparison of system properties}
    \scriptsize
    \begin{tabularx}{\columnwidth}{l|XXXXXX}
    \toprule
    \makecell[l]{\textbf{Solution /}\\\textbf{Reference}} &
    \makecell{\textbf{Client}\\\textbf{Unobser-}\\\textbf{vability}} &
    \makecell{\textbf{Authenti-}\\\textbf{cation}\\\textbf{Integrity}} &
    \makecell{\textbf{Distri-}\\\textbf{buted}\\\textbf{Trust}} &
    \makecell{\textbf{Unlink-}\\\textbf{ability}} &
    \makecell{\textbf{Client}\\\textbf{Revo-}\\\textbf{cation}} &
    \makecell{\textbf{No User}\\\textbf{Require-}\\\textbf{ments}}
    \\
    \midrule
    \makecell[l]{POIDC \cite{10.1145/3320269.3384724}} & \makecell{\fullcirc} & \makecell{\fullcirc} & \makecell{\emptycirc} & \makecell{\fullcirc} & \makecell{\emptycirc} & \makecell{\emptycirc} \\
    \makecell[l]{EL PASSO \cite{zhang2021passo}} & \makecell{\fullcirc} & \makecell{\fullcirc} & \makecell{\emptycirc} & \makecell{\fullcirc} & \makecell{\emptycirc} & \makecell{\emptycirc} \\
    \makecell[l]{AIF-ZKP \cite{kroschewski2023save}} & \makecell{\fullcirc} & \makecell{\fullcirc} & \makecell{\emptycirc} & \makecell{\emptycirc} & \makecell{\fullcirc} & \makecell{\emptycirc} \\
    \makecell[l]{ARPSSO \cite{10.1007/978-3-031-70890-9_14}} & \makecell{\fullcirc} & \makecell{\fullcirc} & \makecell{\emptycirc} & \makecell{\fullcirc} & \makecell{\emptycirc} & \makecell{\fullcirc} \\
    \makecell[l]{OPPID \cite{kroschewski2025oppid}} & \makecell{\fullcirc} & \makecell{\fullcirc} & \makecell{\emptycirc} & \makecell{\fullcirc} & \makecell{\fullcirc} & \makecell{\emptycirc} \\
    \midrule
    ExPrESSO & \makecell{\fullcirc} & \makecell{\fullcirc} & \makecell{\fullcirc} & \makecell{\fullcirc} & \makecell{\fullcirc} & \makecell{\fullcirc} \\
    \bottomrule
    \end{tabularx}
    \label{tab:comparision}
\end{table}

\section{Discussion and Conclusion}
Our evaluation shows that ExPrESSO is feasible and can be used in practice as long as a standardizing organisation, such as OIDF, identity providers, and service providers, mutually agree to adopt a privacy-preserving SSO system.
It is worth noticing that we assumed OIDF to maintain a pool of zk-SNARK artifacts for fast responses to IdP. If OIDF maintains a pool of $n$ artifacts and $k$ IdPs exist that use ExPrESSO, in the worst-case scenario,  all of $k$ IdPs might request new artifacts multiple times in a short period of time. If the total number of requests exceeds $n$, then the  $n+1^{st}$ and later IdPs will have to wait for OIDF to perform an MPC-based trusted setup before they can be issued new artifacts. Hence, OIDF should maintain a pool of size $n$ much larger than $k$. Furthermore, our work includes the assumption that OIDF performs the \textit{Powers of Tau} ceremony by importing parameters from an externally operated, decentralized and verifiable source. OIDF sequentially performs the second step through MPC within the participating entities. Both steps require generating parameters evaluated on a random beacon with high entropy and sourced from publicly available data~\cite{10.1007/978-3-031-54776-8_5}. Examples of such randomized data can be the stock market value of a specific day or the value of a block at a particular height in a blockchain.

We have presented an OIDC-based authentication system that preserves the privacy of end-users during SSO-based login sessions of users. We assume OIDF to play a bigger role as a decentralized trust anchor in a zero-trust environment. ExPrESSO leverages properties of Groth's zkSNARK to hide the identity of clients from IdPs. We achieved all of our design goals and target characteristics without involving significant disk space and communication overhead. Users can have privacy-preserving SSO authentication without requiring any additional tools or technical knowledge and with only a small amount of latency. We observe from table~\ref{tab:scheme_comparison} that the user-login time is comparable, if not faster than most other privacy-preserving SSO systems. Additionally, ExPrESSO offers security properties unmatched by any other work. Hence, we can infer that ExPrESSO can be deployed for better privacy in the real world for a little cost of user authentication time (by milliseconds). In the future, we plan to extend this work to enable privacy for users in OIDC code flow as well. 

\newpage

\bibliography{template}

\begin{thebibliography}{10}
\providecommand{\url}[1]{\texttt{#1}}
\providecommand{\urlprefix}{URL }

\bibitem{ExPrESSO2025}
Barman, K.: Expresso-oidc: Zero-knowledge backed extensive privacy preserving single sign-on. \url{https://github.com/kaustabhbarman/ExPrESSO-OIDC} (2025), accessed: 2025-08-29

\bibitem{6956581}
Ben~Sasson, E., Chiesa, A., Garman, C., Green, M., Miers, I., Tromer, E., Virza, M.: Zerocash: Decentralized anonymous payments from bitcoin. In: 2014 IEEE Symposium on Security and Privacy. pp. 459--474. IEEE (2014)

\bibitem{mpc}
{Ben-Sasson, Eli and Chiesa, Alessandro and Green, Matthew and Tromer, Eran and Virza, Madars}: Secure sampling of public parameters for succinct zero knowledge proofs. In: 2015 IEEE Symposium on Security and Privacy. pp. 287--304. IEEE (2015)

\bibitem{bitansky2012extractable}
Bitansky, N., Canetti, R., Chiesa, A., Tromer, E.: From extractable collision resistance to succinct non-interactive arguments of knowledge, and back again. In: Proceedings of the 3rd innovations in theoretical computer science conference. pp. 326--349 (2012)

\bibitem{eberhardt2018zokrates}
Eberhardt, J., Tai, S.: Zokrates-scalable privacy-preserving off-chain computations. In: 2018 IEEE International Conference on Internet of Things (iThings) and IEEE Green Computing and Communications (GreenCom) and IEEE Cyber, Physical and Social Computing (CPSCom) and IEEE Smart Data (SmartData). pp. 1084--1091. IEEE (2018)

\bibitem{powersoftau}
{Ethereum Foundation}: {Perpetual Powers of Tau}. \url{https://github.com/privacy-scaling-explorations/perpetualpowersoftau} (Mar 2019), accessed: 2025-05-22

\bibitem{florencio2007large}
Florencio, D., Herley, C.: A large-scale study of web password habits. In: Proceedings of the 16th international conference on World Wide Web. pp. 657--666 (2007)

\bibitem{291319}
Garg, S., Goel, A., Jain, A., Policharla, G.V., Sekar, S.: {zkSaaS}: {Zero-Knowledge} {SNARKs} as a service. In: 32nd USENIX Security Symposium (USENIX Security 23). pp. 4427--4444. USENIX Association, Anaheim, CA (Aug 2023), \url{https://www.usenix.org/conference/usenixsecurity23/presentation/garg}

\bibitem{10.1145/3335741.3335750}
Goldwasser, S., Micali, S., Rackoff, C.: The knowledge complexity of interactive proof-systems, p. 203–225. Association for Computing Machinery, New York, NY, USA (2019), \url{https://doi.org/10.1145/3335741.3335750}

\bibitem{nist-sp800-63c}
Grassi, P.A., Garcia, M.E., Fenton, J.L.: Pairwise pseudonymous identifiers. \url{https://pages.nist.gov/800-63-4/sp800-63c.html#ppi} (May 2024), nIST Special Publication 800-63C Revision 4, National Institute of Standards and Technology

\bibitem{groth2016size}
Groth, J.: On the size of pairing-based non-interactive arguments. In: Advances in Cryptology--EUROCRYPT 2016: 35th Annual International Conference on the Theory and Applications of Cryptographic Techniques, Vienna, Austria, May 8-12, 2016, Proceedings, Part II 35. pp. 305--326. Springer (2016)

\bibitem{guo2022up}
Guo, C., Lang, F., Wang, Q., Lin, J.: Up-sso: enhancing the user privacy of sso by integrating ppid and sgx. In: 2021 International Conference on Advanced Computing and Endogenous Security. pp. 01--05. IEEE (2022)

\bibitem{10.1145/3320269.3384724}
Hammann, S., Sasse, R., Basin, D.: Privacy-preserving openid connect. In: Proceedings of the 15th ACM Asia Conference on Computer and Communications Security. p. 277–289. ASIA CCS '20, Association for Computing Machinery, New York, NY, USA (2020), \url{https://doi.org/10.1145/3320269.3384724}

\bibitem{hardt2012rfc}
Hardt, D.: Rfc 6749: The oauth 2.0 authorization framework (2012)

\bibitem{10.1007/978-3-031-70890-9_14}
He, J., Lei, L., Wang, Y., Wang, P., Jing, J.: Arpsso: An oidc-compatible privacy-preserving sso scheme based on rp anonymization. In: Garcia-Alfaro, J., Kozik, R., Chora{\'{s}}, M., Katsikas, S. (eds.) Computer Security -- ESORICS 2024. pp. 268--288. Springer Nature Switzerland, Cham (2024)

\bibitem{hughes2005security}
Hughes, J., Maler, E.: Security assertion markup language (saml) v2. 0 technical overview. OASIS SSTC Working Draft sstc-saml-tech-overview-2.0-draft-08  13, ~12 (2005)

\bibitem{kroschewski2023save}
Kroschewski, M., Lehmann, A.: Save the implicit flow? enabling privacy-preserving rp authentication in openid connect. Proceedings on Privacy Enhancing Technologies  (2023)

\bibitem{kroschewski2025oppid}
Kroschewski, M., Lehmann, A., {\"O}zbay, C.: Oppid: Single sign-on with oblivious pairwise pseudonyms. Proceedings on Privacy Enhancing Technologies  (2025)

\bibitem{10.1007/978-3-031-54776-8_5}
Nikolaenko, V., Ragsdale, S., Bonneau, J., Boneh, D.: Powers-of-tau to the people: Decentralizing setup ceremonies. In: P{\"o}pper, C., Batina, L. (eds.) Applied Cryptography and Network Security. pp. 105--134. Springer Nature Switzerland, Cham (2024)

\bibitem{oidf}
{OpenID Foundation}: {OpenID Foundation}. \url{https://openid.net/foundation/board/} (2007), accessed: 2025-08-19

\bibitem{pfitzmann2003analysis}
Pfitzmann, B., Waidner, M.: Analysis of liberty single-sign-on with enabled clients. IEEE Internet Computing  7(6),  38--44 (2003)

\bibitem{richer2015oauth}
Richer, J., Jones, M., Bradley, J., Machulak, M., Hunt, P.: Oauth 2.0 dynamic client registration protocol. Tech. rep. (2015)

\bibitem{sakimura2014openid}
Sakimura, N., Bradley, J., Jones, M., De~Medeiros, B., Mortimore, C.: Openid connect core 1.0 incorporating errata set 1. The OpenID Foundation, specification  335 (2014)

\bibitem{sakimura2014clientreg}
Sakimura, N., Bradley, J., Jones, M., Jay, E.: Openid connect dynamic client registration 1.0. OpenID Specification. OpenID Foundation  (2014)

\bibitem{salleras2020sans}
Salleras, X., Daza, V.: Sans: Self-sovereign authentication for network slices. Security and Communication Networks  2020(1),  8823573 (2020)

\bibitem{10434479}
Sharma, S., KP, J.: Security analysis of oauth 2.0 implementation. In: 2023 Innovations in Power and Advanced Computing Technologies (i-PACT). pp. 1--8 (2023)

\bibitem{walfish2015verifying}
Walfish, M., Blumberg, A.J.: Verifying computations without reexecuting them. Communications of the ACM  58(2),  74--84 (2015)

\bibitem{whitehat2020baby}
WhiteHat, B., Baylina, J., Bell{\'e}s, M.: Baby jubjub elliptic curve. Ethereum Improvement Proposal, EIP-2494  29 (2020)

\bibitem{yildiz2021connecting}
Yildiz, H., Ritter, C., Nguyen, L.T., Frech, B., Martinez, M.M., K{\"u}pper, A.: Connecting self-sovereign identity with federated and user-centric identities via saml integration. In: 2021 IEEE Symposium on Computers and Communications (ISCC). pp. 1--7. IEEE (2021)

\bibitem{zhang2021passo}
Zhang, Z., Kr{\'o}l, M., Sonnino, A., Zhang, L., Rivi{\`e}re, E.: El passo: Efficient and lightweight privacy-preserving single sign on. Proceedings on Privacy Enhancing Technologies  (2021)

\end{thebibliography}

\newpage

\begin{subappendices}

\section{ExPRESSO registration, login, and deregistration flow}
\label{sec:app_full_flow}

\begin{figure}[htbp] 
  \centering
  \includegraphics[width=\textwidth]{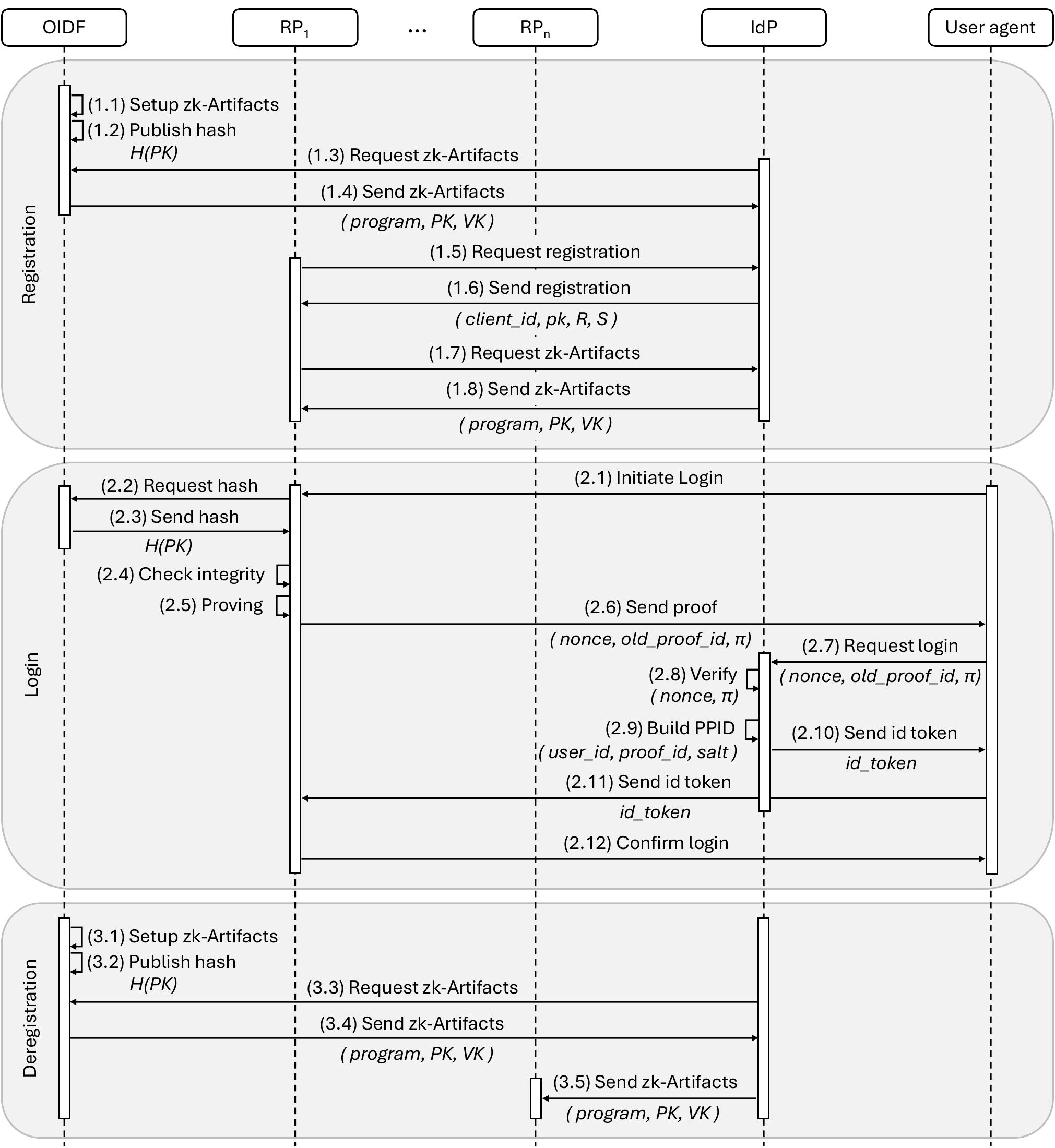}
  \caption{Sequential flow of information during the registration, login, and deregistration phase}
  \label{fig:all_flow}
\end{figure}




\end{subappendices}

\end{document}